\documentclass[11pt,letterpaper]{article}
\usepackage[T1]{fontenc}
\usepackage[utf8]{inputenc}
\usepackage{lmodern}
\usepackage[left=1.05in,right=1.05in,top=1in,bottom=1.1in,includefoot]{geometry}

\usepackage{amsmath,amssymb,amsthm,booktabs,tabularx,array,graphicx,microtype}
\usepackage{tikz,flafter,float,placeins}
\usetikzlibrary{arrows.meta,positioning}
\usepackage[round,authoryear]{natbib}
\usepackage{hyperref,url}
\makeatletter
\g@addto@macro\UrlBreaks{\do\a\do\b\do\c\do\d\do\e\do\f\do\g\do\h\do\i\do\j\do\k\do\l\do\m\do\n\do\o\do\p\do\q\do\r\do\s\do\t\do\u\do\v\do\w\do\x\do\y\do\z}
\makeatother
\microtypesetup{protrusion=false}
\newcommand{\wtb}{\textsc{WTB}}
\newcommand{\cont}{\textsc{Continuity}}
\newcommand{\mas}{\textsc{MasDrift}}
\newcommand{\code}[1]{\texttt{#1}}
\makeatletter
\let\tableinput\@@input
\makeatother
\newtheorem{proposition}{Proposition}
\title{From Evidence to Effect:\\Authority Semantics and Runtime Infrastructure\\for Stateful Agents}
\author{\normalsize\begin{tabular}{c}
Yang Li\textsuperscript{1}\quad Zongsi Xu\textsuperscript{1}\quad Sergey Volkov\textsuperscript{1}\quad hai liu\textsuperscript{2}\\[3pt]
Tuo Zhou\textsuperscript{1}\quad Xiyu Chen\textsuperscript{1}\quad Di Wan\quad Dian Shao\textsuperscript{4}\\[3pt]
Ye Luo\textsuperscript{1,*}\quad Hao Sun\textsuperscript{5,*}
\end{tabular}\\[8pt]
\small\begin{tabular}{c}
\textsuperscript{1}University of Hong Kong (hku.hk)\\[2pt]
\textsuperscript{2}Jiangxi Science and Technology Normal University (jxstnu.edu.cn)\\[2pt]
\textsuperscript{4}The Hong Kong University of Science and Technology (ust.hk)\\[2pt]
\textsuperscript{5}Shenzhen University (szu.edu.cn)
\end{tabular}\\[6pt]
\small\textsuperscript{*}Corresponding authors: Ye Luo and Hao Sun.}
\date{}
\hypersetup{hidelinks,pdftitle={From Evidence to Effect: Authority Semantics and Runtime Infrastructure for Stateful Agents},pdfauthor={Yang Li; Zongsi Xu; Sergey Volkov; hai liu; Tuo Zhou; Xiyu Chen; Di Wan; Dian Shao; Ye Luo; Hao Sun}}
\begin{document}
	\maketitle

\begin{abstract}
	Stateful agents reuse artifacts after producing executions and permissions change. We formalize authority-sufficient observations and durable effects bound to execution and material identities. WTB implements this interface through runtime adapters, shared evidence, and transactional publication/recovery. Six study families separate the mechanism from its integration. Raw and typed evidence both solve 32/32 authority cases, with model-dependent planning effects. Fixed-intent enforcement blocks six unsafe proposals and executes 12 eligible authorized intents. Complete controls match WTB's capability. Paid integration yields 176/210 accepted benchmark-source stages, including 19/30 publication stages, recovery on 8/8 primary SWE repositories, and the most complete continuous trajectories on each of three source tasks. The findings connect authority information, effect admission, and infrastructure reuse in stateful agents.
\end{abstract}

\section{Introduction}
\label{sec:intro}
Consider two agents that contribute to a configuration. Their final file can pass every functional test while one contribution comes from a superseded attempt or a writer whose permission has been revoked. A later publisher must resolve that relation across the workspace, execution registry, and authority service. If the remote write succeeds but its response is lost, recovery must preserve the same relation while determining whether an effect already exists. Artifact correctness and execution authority are complementary obligations.

Our central claim is that \emph{stateful agent effects need an explicit, execution-bound authority interface}. Its meaning depends on both historical evidence and current state. We separate three questions: which facts make authority observable, how their representation affects planning, and which transitions preserve authority when an effect commits. This separation explains why accurate planning and safe execution require different interventions.

\paragraph{Theoretical contribution.}
We connect observation sufficiency, execution/material identity, and commit-time authority in one stateful-agent contract. This yields a testable separation between adding authority facts, changing their representation, and admitting a fixed intent. The novelty is the execution-to-effect account and its intervention design: standard observation factorization and complete mediation supply its foundations \citep{saltzer,continuity}.

\paragraph{Engineering contribution.}
\wtb{} connects runtime evidence to publication authority through a common attempt-binding interface, shared receipt store, and transactional publication/recovery. DSH/Cordis and LangGraph adapters, plus a Ray capture prototype, connect native identities to that core (Figure~\ref{fig:architecture}). Complete controls tie on capability; WTB improves observed fixed-budget completion in acquisition, recovery, and handoff. The coding studies locate the practical value in reuse of the supplied authority interfaces.

\paragraph{Experimental scope.}
We report six study families and 15 protocol/cohort subexperiments across four benchmark families and 16 executed task/repository workloads. The record includes 1,088 controlled authority/runtime cells, 96 planning calls, 32 matched-intent replays, 392 equal-contract checks, 24 live interleavings, 1,120 independent coding slots, 120 prospective SWE stage slots, and 117 continuous trajectories with 468 stage outcomes. Table~\ref{tab:inventory} preserves their units. All six families contribute primary-text evidence; the appendices provide the full source-level analyses.

\section{Authority semantics}
\label{sec:method}
\paragraph{Authority-sufficient observations.}
Let $S=(W,E,A,L)$ contain workspace/material state $W$, execution evidence $E$, scoped authority $A$, and the publication ledger $L$. For an effect request $a$, let $Y(S,a)\in\{0,1\}$ denote admissibility. An observation $O$ is \emph{authority-sufficient} on a state domain if some decision rule $d$ satisfies $d(O(S),a)=Y(S,a)$ throughout that domain.

\begin{proposition}[Observation sufficiency]
	An observation is authority-sufficient exactly when $Y(\cdot,a)$ is constant on each observation-equivalence class, for every request $a$.
	\label{prop:sufficiency}
\end{proposition}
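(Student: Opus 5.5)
We prove both directions directly from the definition, working on a fixed state domain $\mathcal{D}$ and request set. Two states are \emph{observation-equivalent} when $O(S)=O(S')$, so the classes are the fibers $O^{-1}(o)\cap\mathcal{D}$ for $o$ in the image $O(\mathcal{D})$. The proposition is the standard factorization fact: a map $Y(\cdot,a)$ factors through $O$ exactly when it is measurable with respect to the partition induced by $O$. We carry this out separately for each request $a$ and then combine.

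\emph{Necessity.} Suppose some rule $d$ satisfies $d(O(S),a)=Y(S,a)$ for all $S\in\mathcal{D}$. Take any $S,S'$ in the same class, so $O(S)=O(S')$. Then
\[
Y(S,a)=d(O(S),a)=d(O(S'),a)=Y(S',a).
\]
Hence $Y(\cdot,a)$ is constant on every class, for every $a$.

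\emph{Sufficiency.} Suppose $Y(\cdot,a)$ is constant on each class, for every $a$. Define $d$ on pairs $(o,a)$ as follows.
\begin{itemize}
\item If $o\in O(\mathcal{D})$, set $d(o,a)=Y(S,a)$ for any $S\in\mathcal{D}$ with $O(S)=o$.
\item Otherwise, set $d(o,a)=0$. This is arbitrary, and we choose the conservative "deny" value.
\end{itemize}
The first case is well defined, and this is the only point that needs care. Any two choices of $S$ lie in the same class, so they give the same value by hypothesis. Selecting one representative per class needs at most a choice function over the fibers. We can avoid even that by writing $d(o,a)$ as the common value of $\{Y(S,a): O(S)=o\}$, which is a singleton. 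By construction $d(O(S),a)=Y(S,a)$ on all of $\mathcal{D}$, so $O$ is authority-sufficient.

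The main obstacle is only to state the quantifier structure correctly. The rule $d$ is allowed to depend on $a$ as well as on $O(S)$, which is why the condition is imposed per request and not on the joint map $(S,a)\mapsto Y$. Sufficiency is also relative to the domain $\mathcal{D}$: classes are formed inside $\mathcal{D}$, and behaviour outside the image is unconstrained. I would close by noting the operational reading. An observation fails to be sufficient exactly when there are two states with identical observations but different admissibility. An example is a revoked versus a current writer whose workspaces look the same. Such a pair is the witness the later studies construct.
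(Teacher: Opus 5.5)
Your proof is correct and follows the paper's argument essentially verbatim. Necessity comes from substituting $O(S)=O(S')$ into the decision rule, and sufficiency from defining $d(o,a)$ as the common value of $Y(\cdot,a)$ on each nonempty fiber, with an arbitrary value off the image; your deny value $0$ is one instance of that, and your remarks on avoiding a choice function and on the per-request, domain-relative quantifiers are accurate refinements rather than a different route.
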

The proof is the factorization of $Y$ through $O$ (Appendix~\ref{app:aliasing}). It supplies a construction criterion: hold visible material and history fixed, then vary a hidden authority relation. For example, exchanging two writers' scope assignments preserves their final bytes but reverses admissibility. A planner using the unchanged observation has expected accuracy $1/2$ on a balanced pair requiring opposite decisions.

\paragraph{Three interventions.}
Adding authoritative evidence can split an aliased class. Lossless raw and typed encodings of the same admitted facts preserve its information partition, while an LLM can respond differently to their layouts. Finally, a valid decision at time $t$ can become stale before commit if an attempt, permission, material version, or approval state changes. These observations motivate three controlled contrasts:
\begin{equation}
	\underbrace{\mathrm{C0}\to\mathrm{R0}}_{\text{authority facts}},\qquad
	\underbrace{\mathrm{R0}\to\mathrm{W0}}_{\text{equal-information representation}},\qquad
	\underbrace{\mathrm{W0}\to\mathrm{W1}}_{\text{fixed-intent effect admission}}.
	\label{eq:contrasts}
\end{equation}
The first two change planner observations; the third fixes its output and changes effect admission. Observation sufficiency characterizes information, while commit-time admission accounts for mutable state.

\section{WTB: binding evidence to durable effects}
\label{sec:infrastructure}
\paragraph{Execution-bound authority.}
A publication binding $b=(C,m,v,q)$ contains contributor records $C$, material $m$ with admitted version $v$, and approval $q$. Each contributor record identifies a native execution $x$, admitted attempt $k$, actor $u$, scope $s$, terminal receipt $r$, and scoped authority generation $g$. Define $\mathrm{Fresh}(C,S)$ to require every contributor's current attempt, completed matching receipt, and current permission at the recorded generation. Fresh publication requires
\begin{equation}
	\begin{split}
		\mathrm{Admit}(b,S)={}&\mathrm{Fresh}(C,S)\land\mathrm{MaterialChainMatches}(m,v,C,S)\\
		&\land\mathrm{ApprovalMatches}(q,C,m,v)\land\mathrm{Unconsumed}(q,S).
	\end{split}
	\label{eq:contract}
\end{equation}
The material predicate binds exact identity and contributor-chain continuity. Equal bytes can belong to different admitted versions; an approval for v1 therefore requires renewal for v2. A scoped generation distinguishes uninterrupted permission from revoke-and-regrant. Both choices become directly observable in our experiments.

\begin{figure}[t]
	\centering
	\begin{tikzpicture}[x=1mm,y=1mm,font=\normalsize,>=Latex,
  box/.style={draw=black!75,line width=.6pt,rounded corners=1pt,align=center,inner sep=2pt},
  flow/.style={->,line width=.7pt},query/.style={->,dashed,line width=.7pt}]
\node[box,text width=39mm,minimum height=12mm,fill=gray!5] (dsh) at (21,0) {\textbf{DSH / Cordis}\\SDK, session/message};
\node[box,text width=39mm,minimum height=12mm,fill=gray!5] (lg) at (69,0) {\textbf{LangGraph}\\thread/checkpoint};
\node[box,text width=39mm,minimum height=12mm,fill=gray!5] (ray) at (117,0) {\textbf{Ray prototype}\\job/task capture};
\node[box,text width=131mm,minimum height=10mm,fill=blue!4] (port) at (69,-20) {\textbf{Runtime adapter contract}\\admitted attempt + native identity $\longrightarrow$ bound outcome receipt};
\draw[flow] (dsh.south)--(dsh.south |- port.north);
\draw[flow] (lg.south)--(port.north);
\draw[flow] (ray.south)--(ray.south |- port.north);
\draw[draw=black!60,dashed,line width=.6pt,rounded corners=2pt] (0,-30) rectangle (138,-61);
\node[fill=white,inner sep=2pt] at (88,-30) {\textbf{WTB authority core}};
\node[box,text width=37mm,minimum height=16mm,fill=blue!4] (ev) at (22,-46) {\textbf{Evidence store}\\exact attempt\\terminal reconciliation};
\node[box,text width=37mm,minimum height=16mm,fill=teal!7] (bind) at (69,-46) {\textbf{Authority binding}\\contributors, material\\current permission};
\node[box,text width=37mm,minimum height=16mm,fill=teal!7] (sink) at (116,-46) {\textbf{Publication sink}\\check + consume\\durable effect ledger};
\draw[flow] (port.south -| ev.north)--(ev.north);
\draw[flow] (ev.east)--(bind.west);
\draw[flow] (bind.east)--(sink.west);
\node[box,text width=37mm,minimum height=12mm,fill=gray!5] (ws) at (22,-74) {\textbf{Workspace owner}\\material + version};
\node[box,text width=37mm,minimum height=12mm,fill=gray!5] (policy) at (69,-74) {\textbf{Authority owner}\\scope + generation};
\node[box,text width=37mm,minimum height=12mm,fill=gray!5] (receiver) at (116,-74) {\textbf{External receiver}\\apply / lookup};
\draw[query] (ws.north)--(22,-64)--(57,-64)--(57,-54);
\draw[query] (policy.north)--(bind.south);
\draw[flow] (sink.south)--(receiver.north);
\end{tikzpicture}
	\caption{WTB authority architecture. Solid arrows carry bound evidence or effect requests; dashed arrows supply current material and permission facts. Adapters preserve native identity, while the shared core evaluates authority. The sink records a local effect or reserves external delivery for a cooperating receiver. Runtime-specific paths appear in Figure~\ref{fig:adapters}.}
	\label{fig:architecture}
\end{figure}

\paragraph{Adapter contract and evidence core.}
WTB admits an attempt with an exact session, provider identity, configuration, and input before dispatch. The adapter returns a receipt bound to that envelope. DSH uses its Python SDK/JSON-RPC seam and durable message identity; LangGraph uses a workflow facade and checkpoint identity; the Ray prototype captures a real task behind a driver-owned dispatch fence. The shared collector checks identity and records evidence; publication evaluates permission. Monotone terminal reconciliation tolerates late/duplicate events and holds conflicts for resolution. Appendix~\ref{app:adapters} documents each adapter's recovery capabilities.

\paragraph{Publication and recovery.}
For a local sink, one transaction checks Equation~\ref{eq:contract}, consumes $q$, and records the effect. For an external sink, a transaction checks and consumes approval while reserving an immutable effect key, payload, and authority snapshot. The receiver atomically checks its current authority token and inserts at most one effect per key. If the response is lost, recovery looks up that key: an applied result returns its receipt; confirmed absence permits retry after rechecking current contributor/material authority; an unresolved result remains pending (Figure~\ref{fig:contract}). Reusing a receipt produces no new write.

\begin{proposition}[Effect safety under mediated transitions]
	Under trusted evidence and authority state, serialized local check/consume/commit, and atomic receiver token-check/deduplication, each new local effect satisfies Equation~\ref{eq:contract} in its atomic transition's pre-state. Each external effect has a valid reservation and passes the receiver's authority check at commit. A fixed effect key produces at most one committed effect.
	\label{prop:safety}
\end{proposition}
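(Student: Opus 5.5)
The plan is an invariant argument over the serialized history of mediated transitions. I would first model the system as a transition system whose state is $S=(W,E,A,L)$ together with the external receiver's table of committed effect keys. By the trust hypothesis, every predicate in Equation~\ref{eq:contract} is evaluated on true state. The transitions are of four kinds: local check/consume/commit, external reserve, receiver apply, and recovery lookup/retry. Everything else (evidence arrival, permission changes, version changes) is an \emph{environment} transition that never writes an effect. Because local transitions are serialized and receiver transitions are atomic, every execution can be linearized into a sequence of atomic steps. Each step has a well-defined pre-state, so I can reason one step at a time.

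\textbf{Local effects.} I would show that the only transition that appends a local effect to $L$ is the local check/consume/commit transaction. By construction, that transaction evaluates $\mathrm{Admit}(b,S)$ and aborts when it is false. Atomicity means no other step interleaves between the check and the append. The check is therefore evaluated on exactly the pre-state of the transition that creates the effect, which proves the first clause. Consuming $q$ in the same step also makes $\mathrm{Unconsumed}(q,\cdot)$ false afterwards, so the same approval cannot admit a second local effect.

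\textbf{External effects.} An external effect is created only by the receiver's atomic apply on some key $\kappa$, and a request carrying $\kappa$ exists only after a reservation transaction committed. That reservation checked Equation~\ref{eq:contract} in its own pre-state by the same argument as above, so every external effect has a valid reservation. The receiver's apply is conditioned on its current authority token in the same atomic step, which gives the second clause.

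\textbf{At most one effect per key.} I would prove the invariant $|\{\text{committed effects with key }\kappa\}|\le 1$ by induction over the linearized steps. Only receiver apply can add an effect, and it inserts only if $\kappa$ is absent, so it preserves the invariant. Recovery reaches the receiver either through lookup, which writes nothing and returns an existing receipt, or through a retry that again goes through apply. Reusing a receipt writes nothing.

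The main obstacle is the retry path after confirmed absence. A stale "absent" answer could race with a delayed original delivery, which raises two questions. The first is whether dedup still holds; it does, because both deliveries pass through the same atomic insert-if-absent on the same immutable key. The second is whether the retried effect still satisfies the authority clause. Here I rely only on the receiver token check at commit together with the original reservation, not on the recovery recheck. That is exactly what the statement claims: it does not claim that Equation~\ref{eq:contract} holds at external commit time.
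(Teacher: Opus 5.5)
Your proposal is correct and is essentially the paper's own proof. Both argue by induction over the permitted transitions: the local check/consume/commit is evaluated atomically on its pre-state, every send traces to a committed admissible reservation with immutable key, payload, and token, and the receiver's atomic token check plus insert-if-absent covers delayed senders and concurrent retries.

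Two small tightenings, both explicit in the paper. First, your per-key induction says only receiver apply adds an effect, but the local commit also inserts one, so that case should cite its unique-key insertion together with the consumption of $q$. Second, the linearization requires the writers of $W$, $E$, $A$ to share the local transaction's serialization boundary, not merely that local transitions are serialized among themselves.
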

The proof follows the permitted transitions and unique effect keys (Appendix~\ref{app:protocolproof}). The two commit points give distinct local and receiver guarantees. WTB, the SQL guard, and the \cont{}-bound control share the cooperating receiver in our tests. Trusted adapters, actor assignment, stores, policy state, and mediated sinks define the protocol boundary. Tests exercise these responsibilities across implementations; the propositions describe the reference semantics.

\section{Evaluation design}
\label{sec:design}
\begin{table}[t]
	\caption{Six families answer five research questions. RQ2 combines deterministic responsibility ablation and live interleavings. Counts retain their experimental unit; native assertions within a coding stage remain within that stage.}
	\label{tab:inventory}
	\centering
	\begin{tabularx}{\linewidth}{@{}l c X@{}}
		\toprule
		Family & RQ & Design and unit \\
		\midrule
		Original Mechanism & 1 & 128 authority cells; 96 planning calls; 32 fixed-intent replays; 392 contract checks \\
		Runtime Responsibility Ablation & 2 & 3 runtimes $\times$ 2 backends $\times$ 20 worlds $\times$ 8 implementations = 960 cases \\
		Live Authority Interleaving & 2 & 2 runtimes $\times$ 2 backends $\times$ 3 controls $\times$ 2 interventions = 24 executions \\
		Source-Derived Adoption Cost & 3 & 4 source cohorts $\times$ 10 blocks $\times$ 4 arms $\times$ 7 stages = 1,120 slots \\
		SWE Repository Integration & 4 & Three-project calibration; 10 qualified repositories $\times$ 4 arms $\times$ 3 stages = 120 prospective slots \\
		Continuous Adaptation & 5 & 117 observed trajectories on 3 source tasks; 4 sequential stages = 468 outcomes \\
		\bottomrule
	\end{tabularx}
\end{table}

\paragraph{Sources and task definition.}
TeamBench, BulkPR-Bench, and Terminal-Bench each supply one fixed source task \citep{teambench,bulkpr,terminalbench}. Table~\ref{tab:sources} records functional validation and preliminary enforcement; BulkPR's official hidden-verifier result is unavailable. SWE-rebench supplies ten qualified repositories and three calibration projects with native tests \citep{swerebench}. Downstream coding measures governance integration around these reference artifacts. Repeated blocks measure within-task variation; source revisions appear in Appendix~\ref{app:sources}.

\begin{table}[t]
	\caption{Three fixed benchmark-source artifacts support the downstream coding studies. Preliminary W1 task replays use zero paid model calls and remain outside formal study totals. The fresh-authority reference matches WTB on both unsafe and legal effects.}
	\label{tab:sources}
	\centering
	\begin{tabularx}{\linewidth}{@{}X l r r@{}}
		\toprule
		Source task & Saved functional result & \multicolumn{2}{c}{Preliminary W1 replay}\\
		& & Unsafe: unguarded $\to$ WTB & WTB legal \\
		\midrule
		TeamBench: schema evolution & 12/12 checks & $1/3\to0/3$ & 2/2 \\
		BulkPR: paired attrs PRs & 164 local tests pass & $2/3\to0/3$ & 1/1 \\
		Terminal: batched evaluation & 5 tests pass; reward 1 & $1/3\to0/3$ & 2/2 \\
		\bottomrule
	\end{tabularx}
\end{table}

\paragraph{Baselines and substrates.}
Paid coding compares supplied WTB, Git/Auth, \cont{}, and \mas{} packages on DSH. Application starts differ by study: independent adoption supplies stage-specific predecessors; prospective SWE begins with empty integration code; continuous adaptation supplies qualified E2 implementations. Native libraries stay read-only. Appendix~\ref{app:starts} specifies supplied APIs and remaining work. The estimand is integration above these packages, whose responsibilities differ. Original/enhanced contract checks separately test capability. Runtime controls are WTB, SQL guard, and CONTINUITY-bound adapter; DSH/Cordis, LangGraph, and Ray are substrates, with live tests on the first two.

\paragraph{Models, budgets, and inference.}
Authority cells use DeepSeek V4 Flash/GPT-5.4 Mini; planning uses DeepSeek V4 Flash/GPT-5.4 Nano. Paid adoption, prospective SWE, and continuous stages use DeepSeek V4 Flash with 64 responses, 64 actions, and three acceptance submissions. Effect replay, runtime, and contract studies use deterministic payloads without paid model calls. Cost includes valid response work on accepted and failed attempts above supplied libraries; historical construction and human effort are unpriced. We separate fixed-workload, per-accepted-unit, and joint-success costs. Paired tests use repositories or task--blocks, with six SWE and nine continuous Holm comparisons.

\section{RQ1: What changes with information, representation, and enforcement?}
\label{sec:controlled}
\begin{table}[t]
	\caption{Information and representation have distinct empirical effects. (a) Raw authority receipts R0 and typed bindings W0 tie on controlled semantic success. (b) Their unsafe-proposal direction differs across models; unknown decisions are an additional outcome. The cohorts in (a) and (b) have different protocols and model routes.}
	\label{tab:mechanism}
	\centering
	\begin{tabular}{@{}lrr@{}}
		\toprule
		\multicolumn{3}{l}{(a) Authority matrix: 32 cells per arm}\\
		Arm & Final semantic success & ITT composite \\
		\midrule
		\tableinput tables/authority_main.tex
		\bottomrule
	\end{tabular}
	\medskip
	
	\begin{tabular}{@{}lrrrr@{}}
		\toprule
		\multicolumn{5}{l}{(b) Shared-workspace planning: 32 calls per arm}\\
		& \multicolumn{3}{c}{Unsafe publish proposals / unauthorized cases} & Unknown \\
		Arm & Pooled & DeepSeek & GPT-5.4 Nano & / all calls \\
		\midrule
		\tableinput tables/planning_main.tex
		\bottomrule
	\end{tabular}
\end{table}

\paragraph{Authority facts resolve the controlled ambiguity.}
Proposition~\ref{prop:sufficiency} predicts that added authority facts can resolve an observational ambiguity. In the 128-cell matrix, G0 supplies Git evidence with the candidate unavailable; C0 makes candidate generations addressable without authority facts; R0 adds raw receipts; W0 encodes the same admitted facts as a typed relation. Within 32 task/lineage/model/world strata, R0 and W0 achieve 32/32 final semantic successes, and G0 and C0 achieve 0/32 (Table~\ref{tab:mechanism}a). The informative contrast is C0--R0: 32 favorable discordances and zero unfavorable, exact two-sided $p=4.66\times10^{-10}$. The R0--W0 tie attributes this controlled gain to authority information. Arm-specific first-action scores are defined separately in Appendix~\ref{app:original}.

\paragraph{Model-facing packaging has model-dependent effects.}
The separate 96-call study has 16 authorized and 16 unauthorized situations per arm. Pooled unsafe proposals are 12/16 for GM0, 9/16 for R0, and 6/16 for W0. DeepSeek changes from 8/8 unsafe under R0 to 0/8 under W0, while GPT-5.4 Nano changes from 1/8 to 6/8 (Table~\ref{tab:mechanism}b). DeepSeek supplies all 11 W0 unknowns, including seven of its eight unauthorized situations. Thus equal-information encodings change both proposal safety and intent availability; the model interaction precludes a uniform behavioral advantage for typed evidence.

\paragraph{A matched-intent intervention isolates enforcement.}
The native gate receives the same 32 saved W0 outputs. It produces \textbf{0/16 unauthorized effects}, blocking all six unsafe proposals, and executes \textbf{12/12 eligible authorized intents}. Four other authorized-world rows lack a valid publish intent. The six discordant unsafe proposals give exact paired $p=0.03125$. This intervention isolates effect admission from planning: a supplied intent is checked against authority, while a missing intent remains an availability failure.

\begin{table}[t]
	\caption{Enhanced comparison mechanisms reach capability parity. Each configuration receives 56 checks across 14 scenarios and four tasks. WTB passes 56/56. The seven configurations yield 392 outcomes; enhancement-development effort is outside this experiment.}
	\label{tab:parity}
	\centering
	\begin{tabular}{@{}lrr@{}}
		\toprule
		Comparison architecture & Original contract & Enhanced contract \\
		\midrule
		\tableinput tables/parity_main.tex
		\bottomrule
	\end{tabular}
\end{table}

\paragraph{Complete contracts are implementable across architectures.}
Table~\ref{tab:parity} supplies the omitted execution, material, and current-authority responsibilities to each original comparison. All enhanced systems and WTB pass 56/56. This establishes implementability at multiple architectural placements. The coding studies then evaluate the documented library/scaffold packages: their observed differences combine supplied functionality with the work required to connect it to an application.

\section{RQ2: Which runtime responsibilities preserve authority?}
\label{sec:runtime}
\begin{table}[t]
	\caption{Five responsibility removals expose different failures. The 960 deterministic cases draw on six native captures across 20 worlds and eight implementations. ``Legal-correct'' requires the expected completed state and effect count; a duplicate write can set completion while failing this metric.}
	\label{tab:ablations}
	\centering
	\begin{tabular}{@{}lrrr@{}}
		\toprule
		Implementation & Correct cases & Unsafe effects & Legal-correct \\
		\midrule
		\tableinput tables/runtime_main.tex
		\bottomrule
	\end{tabular}
	\medskip
	
	\begin{tabular}{@{}lrrr@{}}
		\toprule
		Live runtime & Correct executions & Relevant-scope effects & Unrelated-scope effects \\
		\midrule
		DSH/Cordis & 12/12 & 0 in 6 executions & 6 in 6 executions \\
		LangGraph & 12/12 & 0 in 6 executions & 6 in 6 executions \\
		\bottomrule
	\end{tabular}
\end{table}

\paragraph{Deterministic ablations identify responsibility-specific failures.}
The protocol predicts distinct consequences from removing binding, freshness, consumption, and reconciliation. We cross DSH/Cordis, LangGraph, and Ray with SQLite and PostgreSQL, drawing on two native attempts per runtime. Each runtime/backend pair evaluates 20 worlds under eight implementations. All complete controls pass 120/120, preserve 60 legal cases, and produce zero unsafe effects (Table~\ref{tab:ablations}). Removing exact-attempt binding causes 12 unsafe effects; generation removal causes six after revoke-and-regrant; consumption removal causes 12 from replay. Monotone-reconciliation removal creates six unsafe effects and loses 12 legal completions. Unknown-recovery removal loses six recoverable completions without creating unsafe effects. The latter distinction is important: conservative uncertainty handling preserves safety while receiver reconciliation restores availability.

\paragraph{Live interleavings exercise authority freshness during execution.}
A separate 24-execution study changes authority during a real in-flight call, then captures its receipt and attempts publication. Across two runtimes, two backends, and three controls, all 12 relevant changes block publication and all 12 unrelated changes preserve one effect. Each implementation passes 8/8. This tests the gap between an earlier admission and current publication: freshness blocks relevant changes, while scope precision preserves unrelated work.

\section{RQ3: How readily do agents adopt the authority contract?}
\label{sec:adoption}
\begin{table}[t]
	\caption{All four independent-adoption cohorts and the pooled stage profile. (a) Each entry gives accepted/70 and total model-response cost in dollars. (b) Each entry has denominator 30 across the three benchmark-source tasks. Stages start independently.}
	\label{tab:adoption}
	\centering
	\begin{tabular}{@{}lrrrr@{}}
		\toprule
		\multicolumn{5}{l}{(a) Cohort outcome and fixed-workload cost}\\
		Source & WTB & Git/Auth & CONTINUITY & MasDrift \\
		\midrule
		\tableinput tables/adoption_cohorts_main.tex
		\bottomrule
	\end{tabular}
	\medskip
	
	\begin{tabular}{@{}lrrrr@{}}
		\toprule
		\multicolumn{5}{l}{(b) Three-source stage acceptance}\\
		Stage & WTB & Git/Auth & CONTINUITY & MasDrift \\
		\midrule
		\tableinput tables/adoption_stages.tex
		\bottomrule
	\end{tabular}
\end{table}

\paragraph{Independent stages isolate the integration burden.}
Source-Derived Adoption Cost assigns ten blocks, four methods, and seven stages to an earlier controlled fixture and to one fixed task from each of TeamBench, BulkPR-Bench, and Terminal-Bench. Each cohort contributes 280 slots, totaling 1,120. E0 covers evidence acquisition/concurrency; E1 publication; E2 recovery; E3 a second workflow; E4 contribution-chain handoff; E5 receipt-envelope change; E6 strict parsing. Each slot starts from its specified scaffold with the common within-study coding budget. Costs include failed and repaired model work.

\paragraph{The adoption advantage repeats across source tasks.}
WTB leads accepted-stage count and lowers fixed-workload cost in all four cohorts (Table~\ref{tab:adoption}a). Three-source transfer gives 176/210 WTB acceptances, versus 121/108/103; costs per acceptance are \$0.1313/\$0.2397/\$0.2954/\$0.3277. Its strongest stage advantages are E0 acquisition, E1 publication, E2 recovery, and E4 contribution handoff.

\paragraph{Publication tests the version-bound interface.}
The final E1 cohort accepts 19/30 WTB transfer slots, versus 7/30 Git/Auth, 0/30 \cont{}, and 3/30 \mas{}. WTB accepts 7/10, 7/10, and 5/10 on the three source tasks, plus 5/10 on the controlled source. Agents start from the supplied E0 predecessor and must bind approval to admitted material version, content, and current authority. Successful WTB integrations establish practical feasibility within budget; unsuccessful work remains counted. Git/Auth and \cont{} complete more E3 and E6 slots.

\paragraph{Bundles and cost answer narrower questions.}
WTB passes independent E0--E2 bundles in 13/30 blocks, versus 1/0/0, and full E0--E6 bundles in 9/30, versus zero for each comparator. E3--E6 bundles pass in 20/30 versus 10/10/7. WTB costs less in all seven joint E2 successes with Git/Auth, while Git/Auth costs less in most joint E3 successes. RQ5 tests a persistent candidate.

\section{RQ4: Does integration transfer to real repository code?}
\label{sec:swe}
\begin{table}[t]
	\caption{Prospective SWE integration: eight primary repositories under identical stage budgets. Counts give accepted/8; costs include previous paid work through the named stage. Later stages can repair earlier failures. Ten qualified repositories including two development repositories yield 120 terminal stages.}
	\label{tab:swe}
	\centering
	\begin{tabular}{@{}lrrrrr@{}}
		\toprule
		Method & Core C0 & Recovery E2 & Handoff E4 & Cost to E2 (\$) & Cost to E4 (\$) \\
		\midrule
		\tableinput tables/swe_main.tex
		\bottomrule
	\end{tabular}
\end{table}

\paragraph{Prospective qualification and cumulative acceptance.}
Twelve selected repositories yield ten qualified execution workloads: eight primary and two development repositories. Four arms begin with empty application integration code and carry paid code through core C0, recovery E2, and actual two-writer handoff E4. Acceptance combines native project tests with five, seven, and eleven cumulative governance scenarios. Earlier unsuccessful code remains available for repair, explaining why later acceptance can exceed core acceptance. The measured task uses project reference patches as material and requires a working governance integration.

\paragraph{Recovery and handoff transfer within budget.}
WTB accepts 5/8 core integrations, 8/8 recovery integrations, and 4/8 handoffs; each comparator accepts 1/8 recovery integrations and zero handoffs (Table~\ref{tab:swe}). Through recovery, WTB spends 16.4\%/33.7\%/7.7\% less than Git/Auth/\allowbreak\cont{}/\allowbreak\mas{}. WTB achieves 2/2, 1/2, and 1/2 on development repositories. Separate calibration yields 3/3 completed projects for WTB, 1/3 for Git/Auth, and zero for both research baselines. Appendix~\ref{app:swe} details repositories and charged continuation.

\paragraph{Cost tradeoffs and uncertainty.}
Joint-success costs limit the aggregate cost claim. Kafka recovery costs \$0.2533 with WTB versus \$0.4656 with \cont{}; Tornado recovery costs \$0.6305 versus \$0.2166 for Git/Auth and \$0.2188 for \mas{}. The first jointly completed calibration project likewise favors Git/Auth, \$0.2229 versus \$0.4725. Each primary recovery contrast has seven WTB-only and zero comparator-only successes; each handoff contrast has four and zero. Exact $p$ values are 0.015625 and 0.125, and six-comparison Holm values are 0.09375 and 0.375. The observed completion differences are large in this qualified sample; its small size leaves substantial uncertainty about broader repository populations.

\section{RQ5: Does the same candidate sustain successive changes?}
\label{sec:continuous}
\begin{table}[t]
	\caption{Continuous adaptation: each entry gives complete E3--E6 trajectories, followed by E4 accepted/entered in parentheses. Denominators retain observed cohorts and predecessor-dependent E4 entry.}
	\label{tab:continuousmain}
	\centering
	\begin{tabular}{@{}lrrrr@{}}
		\toprule
		Source & WTB & Git/Auth & CONTINUITY & MasDrift \\
		\midrule
		\tableinput tables/continuous_compact.tex
		\bottomrule
	\end{tabular}
\end{table}

\paragraph{Sequential adaptation changes the experimental unit.}
All twelve task--method E2 starts satisfy the shared initial contract. E3 adds a workflow, E4 a multi-contributor chain, E5 a receipt-envelope change, and E6 strict parsing. Each stage consumes the preceding accepted candidate; failure closes its trajectory. The complete-case analysis contains 117 observed trajectories, 468 stage outcomes, and 8,067 completed response records totaling \$27.6157. Costs and matched task--block comparisons condition on this cohort; Appendix~\ref{app:continuous} defines inclusion.

\begin{figure}[t]
	\centering\includegraphics[width=\linewidth]{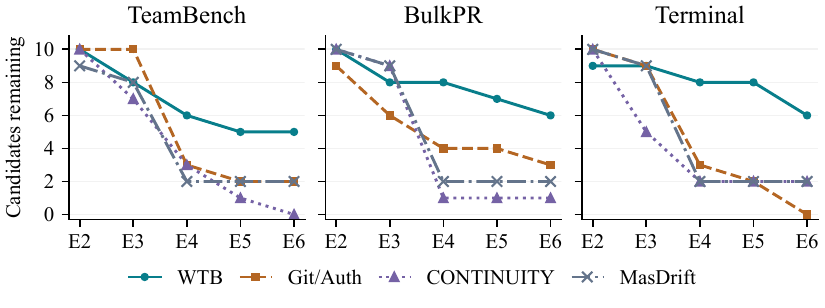}
	\caption{Observed candidate counts across E3--E6, starting from qualified E2. Counts follow each task--method cohort of nine or ten trajectories; Table~\ref{tab:continuousmain} gives exact denominators. WTB retains the most complete trajectories on each task, with substantial separation at E4 handoff.}
	\label{fig:outcomes}
\end{figure}

\paragraph{Completion advantage is sustained across all three tasks.}
WTB completes 5/10 TeamBench, 6/10 BulkPR, and 6/9 Terminal trajectories, exceeding each comparator on each task (Table~\ref{tab:continuousmain}). Figure~\ref{fig:outcomes} localizes the separation: Git/Auth passes E3 on all ten TeamBench trajectories but E4 on three; WTB passes 6/8 E4 entrants there, 8/8 in BulkPR, and 8/9 in Terminal. Maintaining multi-contributor authority is a distinct integration burden.

\paragraph{Handoff failures connect integration back to authority semantics.}
E4 requires earlier and later contributors' exact attempts, chain continuity, renewed authority at publication, and recovery of the same obligations. Recorded comparator failures include an earlier contributor's revoked/replaced execution and discontinuous material chains. WTB connects the DSH receipts and material references to its native publication path. Successful comparator trajectories satisfy the same checks and remain visible in the data. WTB also loses candidates at later changes: BulkPR declines from eight accepted E4 stages to six final successes, and Terminal from eight to six. The infrastructure supports reuse while application adaptation remains fallible.

\paragraph{Incremental cost and accepted code.}
WTB costs \$0.3944/\$0.3836/\$0.3544 per complete trajectory. All six joint-success pairs favor WTB by 1.16--54.32\%; Git/Auth spends less overall on BulkPR while completing fewer trajectories. Successful WTB integrations add mean application-code counts of 135/135/116 lines, versus 208--302 across successful comparator groups. These measurements quantify incremental work above supplied implementations and libraries.

\paragraph{Statistical and substrate boundaries.}
All nine continuous Holm-adjusted comparisons exceed 0.05, with minimum 0.28125. Shared-backend capability also ties: every method passes 6/6 checks per task; six closed-store same-backend moves recover twelve exact receipts without application changes. These JSONL/SQLite results concern DSH storage/restart behavior. Adapter-specific capabilities remain as defined in Appendix~\ref{app:adapters}.

\section{Related work}
\label{sec:related}
\paragraph{Agent authority and information flow.}
Complete mediation places authorization at access boundaries \citep{saltzer}. CaMeL separates control/data flow and checks capabilities; Fides enforces information-flow policies; PAuth binds tool calls to task-scoped intent using provenance envelopes \citep{camel,fides,pauth}. \cont{} composes effect-bound security context, while \mas{} studies preservation through delegation \citep{continuity,masdrift}. Our focus is authority tied to exact producing attempts, material versions, scoped generations, and durable recovery. The controlled intervention ladder separates information, representation, and enforcement; enhanced baseline parity locates the subsequent empirical question in integration and maintenance.

\paragraph{State and provenance infrastructure.}
MemGPT and AIOS organize context and execution services; LangGraph, Ray, and LogAct support persistent or distributed execution and recovery \citep{memgpt,aios,langgraph,ray,logact}. CoAgent coordinates concurrent mutation \citep{coagent}. in-toto binds authorized steps to materials/products, and PROV-O models provenance relations \citep{intoto,provo}. WTB joins historical execution evidence with mutable authority at an effect boundary. This extends the experimental focus from having and representing provenance to maintaining its authorization obligations during downstream use.

\section{Limitations and conclusion}
The reference semantics assumes trusted state and mediated effects, including receiver cooperation. Runtime coverage comprises named fault schedules and six reused captures; live coverage includes two runtimes. Coding results measure incremental integration from supplied libraries, on fixed tasks and recorded model routes. Human/library construction effort is unpriced. The E1 cohort establishes a version-bound publication advantage under this interface and budget; it does not isolate which API affordance accounts for each coding decision. Paired coding comparisons are small, and their adjusted significance tests remain inconclusive. Additional tasks, models, and effect APIs would test the reach of the measured advantages.

Authority is an execution-to-effect relation that must survive changes in time, contributors, and location. Our formalization identifies the information needed to decide it and the transitions needed to maintain it. The experiments show that facts resolve ambiguity, enforcement controls fixed intents, and complete responsibilities recover capability across architectures. WTB packages those responsibilities into infrastructure with higher observed completion in acquisition, publication, recovery, handoff, and continuous maintenance. The empirical claim is reusable authority infrastructure: strong observed integration alongside semantic parity of complete controls.

\label{main:end}

\section*{Reproducibility statement}
The companion experiment package contains compact observations, source provenance, protocols, reconstruction programs, and detailed reports for all six study families. Appendix~\ref{app:protocol} maps the manuscript to its numeric inputs. The companion table/figure builder loads saved records without model calls. All generated tables, vector figures, TeX sources, and bibliography needed to compile this manuscript accompany its source archive.

\section*{AI use statement}
LLMs served as experimental agents in the controlled planning and paid coding studies. AI tools assisted with finding related work, discussing research ideas, drafting proofs of the paper's propositions, and writing and editing the manuscript. The authors verified the reported results and take responsibility for the paper's claims.
	
\bibliographystyle{plainnat}
\bibliography{references}

\appendix
\section{Authority sufficiency and the observation boundary}
\label{app:aliasing}

\paragraph{State and observation.}
Let $\mathcal S$ be a domain of states $S=(W,E,A,L)$ and $\mathcal A$ a set of concrete effect requests. Material state $W$ includes admitted versions and contributor-chain relations. Evidence $E$ includes admitted execution identities and reconciled terminal receipts. Authority $A$ supplies scoped actor permissions and generations. Ledger $L$ records approval consumption, reservations, and effects. A request names its target and intended effect; admissibility is $Y:\mathcal S\times\mathcal A\to\{0,1\}$. The observation $O:\mathcal S\to\mathcal O$ may project or serialize these components.

\paragraph{Proof of Proposition~\ref{prop:sufficiency}.}
Suppose $d(O(S),a)=Y(S,a)$ exists. For any $S,S'$ with $O(S)=O(S')$, substitution gives $Y(S,a)=d(O(S),a)=d(O(S'),a)=Y(S',a)$. Conversely, suppose $Y(\cdot,a)$ is constant on every fiber $O^{-1}(o)$. Define $d(o,a)$ as that common value on each nonempty fiber; its value outside the observation image is arbitrary. Then $d(O(S),a)=Y(S,a)$ for all $S,a$. This establishes both directions.

The criterion depends on the state/request domain. An observation can suffice for one scope and fail for another; it can also suffice for the current state while omitting a later mutation. Our paired construction fixes material, ordinary Git-visible lineage, and visible dialogue, then changes the external actor--scope relation. It directly creates an observation fiber with two different required decisions.

\begin{proposition}[Authority aliasing]
For fixed request $a$, if $O(S_+)=O(S_-)$ and $Y(S_+,a)\ne Y(S_-,a)$, a decision rule using only $O$ assigns the same publication probability to both worlds. Under a uniform distribution over this pair, its expected decision accuracy is $1/2$.
\end{proposition}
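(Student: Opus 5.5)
The plan is to model a (possibly randomized) decision rule that uses only $O$ as a map $\delta:\mathcal O\times\mathcal A\to[0,1]$. Here $\delta(o,a)$ is the probability of publishing, i.e.\ of outputting $1$, when observation $o$ and request $a$ are presented. A deterministic rule $d$ is the special case with values in $\{0,1\}$. Any internal randomness of a planner, such as LLM sampling, is assumed independent of the hidden state given the observation. This is exactly what ``uses only $O$'' means.

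\textbf{Step 1 (equal publication probabilities).} Since $O(S_+)=O(S_-)=:o$, the probability of publication in $S_\pm$ is $\delta(O(S_\pm),a)=\delta(o,a)=:p$ in both worlds. This is the same substitution argument used in the proof of Proposition~\ref{prop:sufficiency}, lifted from decisions to decision probabilities.

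\textbf{Step 2 (per-world accuracy).} Since $Y(S_+,a)\ne Y(S_-,a)$ and both values lie in $\{0,1\}$, one world requires publication and the other requires refusal. Without loss of generality, let $Y(S_+,a)=1$ and $Y(S_-,a)=0$. The probability of a correct decision is then $p$ in $S_+$ and $1-p$ in $S_-$.

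\textbf{Step 3 (average).} Under the uniform distribution on $\{S_+,S_-\}$, the expected accuracy is $\tfrac12 p+\tfrac12(1-p)=\tfrac12$, whatever $p$ is. In particular, no choice of $\delta$ can do better or worse on this pair. As a corollary, consistent with Proposition~\ref{prop:sufficiency}, $O$ is not authority-sufficient on any domain containing both states.

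The only real obstacle is interpretive: we must make precise that an $O$-only rule may be randomized, and that its randomness cannot correlate with the hidden component that distinguishes $S_+$ from $S_-$. I would state this as the modelling assumption that the rule's output distribution is a function of $(O(S),a)$ alone. Once that is fixed, the remaining algebra is immediate.
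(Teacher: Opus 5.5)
Your proof is correct and follows the paper's own argument: the paper also takes $p$ as the shared publication probability, notes that the correctness probabilities in the two worlds are $p$ and $1-p$, and averages them to $1/2$. Your only addition is stating explicitly an assumption the paper leaves implicit. That assumption is that the rule's output distribution depends only on $(O(S),a)$, and it is exactly what justifies a single shared $p$.
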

\begin{proof}
Let $p$ be the shared publication probability. Correctness probabilities in the publish and hold worlds are $p$ and $1-p$; their mean is $1/2$.
\end{proof}

\paragraph{Information-preserving representations.}
Let $F(S)$ be the admitted authority facts and let raw and typed encoders $e_R,e_T$ have decoders satisfying $d_R(e_R(F))=F$ and $d_T(e_T(F))=F$ on the evaluated domain. Both encodings distinguish exactly the same values of $F$. Their information partitions therefore agree. An actual model $\pi$ can nevertheless satisfy $\pi(e_R(F))\ne\pi(e_T(F))$ because its computation depends on the sequence supplied. R0--W0 evaluates this behavioral difference under equal admitted facts. Sufficiency guarantees existence of a decision rule, while the experiment measures specific models.

\paragraph{Time and effect admission.}
An observation taken at $t_0$ can be fully sufficient for $Y(S_{t_0},a)$ while leaving $Y(S_{t_1},a)$ unresolved after a scope revocation at $t_1$. A retained generation distinguishes revoke-and-regrant from uninterrupted permission; a current-attempt pointer distinguishes a successful old execution from its replacement. Checking these facts at the effect boundary is the dynamic counterpart of observation sufficiency. The fixed-intent experiment holds $a$ constant and intervenes on admission; live interleavings mutate authority between execution admission and publication.

\section{Reference transition semantics and implementation responsibilities}
\label{app:protocolproof}

\subsection{Bindings, state, and assumptions}
For $b=(C,m,v,q)$, contributor $c_i\in C$ binds $(x_i,k_i,u_i,s_i,r_i,g_i)$. The predicate $\mathrm{Fresh}$ requires a current admitted attempt $k_i$ for $x_i$, a reconciled completed receipt with the matching actor/scope/input identity, and current authorization of $u_i$ in $s_i$ with unchanged generation $g_i$. Material-chain matching requires that the recorded sequence of contributions yields the exact admitted target/version. Approval matching binds that entire snapshot. Version is an admission identity: repeated identical bytes can occupy distinct versions.

The reference semantics assumes (i) faithful native capture and actor assignment; (ii) durable, trusted execution, material, and policy stores; (iii) complete mediation of the effects under discussion; (iv) serialization between relevant local state updates and check/consume/commit; and (v) a cooperating external receiver with authoritative lookup and atomic token validation plus a unique effect-key insertion. These assumptions describe the interfaces that an implementation must provide. Each study exercises a specified subset of the contract; the full predicate also appears in downstream application acceptance.

For each approval $q$, the ledger records consumption and at most one immutable reservation $(q,\text{sink},\text{payload},\text{token})$. The logical external states are ready, pending/unknown, applied, and rejected. ``Absent'' is a receiver observation used in recovery. It leaves the reservation intact. A stored receipt records an existing effect and does not grant permission for a new effect after a policy change.

\subsection{Permitted transitions}
\begin{figure}[t]
\centering
\begin{tikzpicture}[x=1mm,y=1mm,font=\normalsize,>=Latex,
 box/.style={draw=black!75,line width=.6pt,rounded corners=1pt,align=center,text width=36mm,minimum height=13mm,inner sep=2pt},
 flow/.style={->,line width=.7pt}]
\node[box,fill=gray!5] (ready) at (21,0) {\textbf{Ready}\\bound approval};
\node[box,fill=orange!8] (pending) at (101,0) {\textbf{Pending / unknown}\\durable effect key};
\node[box,fill=teal!7] (applied) at (21,-35) {\textbf{Applied}\\stored receipt};
\node[box,fill=gray!5] (rejected) at (101,-35) {\textbf{Rejected}\\no new effect};
\draw[flow] (ready)--node[above,align=center]{check + reserve}(pending);
\draw[flow] (pending.south west)--node[above,sloped]{apply / lookup}(applied.north east);
\draw[flow] (pending)--node[right,align=center]{policy\\conflict}(rejected);
\draw[flow] (pending) edge[loop above] node[above]{unknown: retain key} (pending);
\draw[flow] (applied) edge[loop below] node[below]{replay: return receipt} (applied);
\end{tikzpicture}
\caption{External-effect lifecycle. Reservation checks authority and consumes approval atomically. A receiver validates its current authority token and deduplicates the key. Unknown outcomes retain the reservation; confirmed absence permits a checked retry. Local publication combines check, consumption, and effect in one transaction.}
\label{fig:contract}
\end{figure}

\begin{table}[htbp]
\centering
\caption{Reference operations and their atomic or durable responsibility. Each new effect is associated with one reserved identity.}
\begin{tabularx}{\linewidth}{@{}l X@{}}
\toprule
Operation & Required transition \\
\midrule
Capture & Bind immutable native evidence to the admitted execution/attempt; reconcile duplicates and terminal observations. \\
Approve & Read admissible contributor/material state and bind approval to its exact snapshot. \\
Local publish & In one transaction, check Equation~\ref{eq:contract}, consume approval, and insert the effect under its unique key. \\
Reserve external & In one transaction, check Equation~\ref{eq:contract}, consume approval, and persist the immutable reservation as unknown before sending. \\
Receiver apply & Atomically compare the supplied authority token with receiver authority and insert the key/payload; return a matching existing receipt on replay. \\
Lookup applied & Persist or return the receiver receipt; produce no additional write. \\
Lookup absent & Recheck contributor/material authority and approval snapshot against the reservation; retry the same key/payload/token when they still match. \\
Lookup unknown & Retain the reservation and defer further effect attempts until its outcome can be resolved. \\
Conflict & Reject a changed binding or conflicting receiver request; preserve any existing effect record. \\
\bottomrule
\end{tabularx}
\end{table}

After reservation, retry checks exclude $\mathrm{Unconsumed}$: the same approval has already been consumed by that reservation. Retry instead requires the stored reservation, an authoritative absence observation, and current contributor/material/approval matching. This preserves single consumption while allowing recovery. A delayed original sender can race with retry; receiver deduplication serializes their common key.

\subsection{Proof of Proposition~\ref{prop:safety}}
Initially the effect and reservation ledgers are empty. For a local effect, the only insertion transition checks Equation~\ref{eq:contract} in the same serialized transaction that consumes approval and inserts the effect. Relevant state writers share that serialization boundary. The predicate therefore holds in the atomic transition's pre-state; its post-state contains both the effect and the consumed approval. The unique effect key and consumption state prevent a second fresh insertion; an existing-result path returns its prior receipt.

For an external effect, the only initial send follows a committed, admissible reservation. All later sends reuse its immutable key, payload, and token. A retry additionally requires authoritative absence and a fresh local snapshot check. Independently of send timing, a receiver insertion atomically checks its current token. Thus every newly inserted external effect has a valid reservation and satisfies receiver authority at receiver commit. Its unique key prevents a second insertion, including under delayed senders or concurrent retry. Applied-result lookup preserves the count. Unknown-result retention and rejection insert no effects. Induction over these transitions proves the stated properties.

The two commit points have deliberately separate predicates. Receiver authority is the state checked by the receiver; a policy service must supply the appropriate token there. A global policy-update/remote-write ordering would additionally require coordination of those services. The reference proposition does not infer that ordering from a local pre-send check. Our external-effect tests instantiate the cooperating receiver explicitly.

Safety alone permits indefinite pending state when a receiver cannot answer. Recovery availability additionally needs eventual reliable lookup, stable admissible authority for retries, and a reachable sink. The lost-response scenarios test this conditional recovery: the effect already exists, so its receipt can be recovered without a second write.

\subsection{Implementation and experiment correspondence}
The WTB publication path supplies transactional publication and durable external delivery. Its application adapters connect native receipts, material observations, and version-sensitive approvals to the contract. The SQL guard implements the same tested decisions independently; the CONTINUITY-bound control places them in its context-bound path. All three share the receiver's token-check and idempotency services in runtime experiments. These shared services contribute to their common capability.

The five responsibility ablations map directly onto the state account. Exact-attempt removal weakens execution identity in $\mathrm{Fresh}$. Generation removal weakens temporal authority. Consumption removal weakens uniqueness of fresh publication. Monotone-reconciliation removal changes the meaning of terminal evidence under reordered/conflicting events. Unknown-recovery removal preserves conservative safety but loses the ability to recover an already committed remote effect. The observed signatures in Table~\ref{tab:ablations} test these individual obligations.

The final adoption E1 cohort tests the implementation boundary directly. A version-sensitive approval relation is required by the application contract; agents integrate it from a supplied evidence/acquisition predecessor. WTB accepts 24/40 publication slots across four sources, including 19/30 benchmark-source slots. The qualified E0--E2 starts for continuous adaptation establish a different, supplied starting condition. Together, these observations connect the reference semantics to both initial coding and subsequent maintenance without conflating their experimental units.

\section{Runtime adapters and architecture realization}
\label{app:adapters}

Figure~\ref{fig:architecture} presents the publication-relevant WTB architecture. Figure~\ref{fig:adapters} expands its runtime boundary. The runtime owns native execution and returns facts; WTB owns the admitted attempt and its use in an authority decision. The workspace owner supplies material observations, the authority owner supplies scoped policy, and the publication sink joins them transactionally. Application integrations connect the task's material/version relation to this core.

\begin{figure}[t]
\centering
\begin{tikzpicture}[x=1mm,y=1mm,font=\normalsize,>=Latex,
 box/.style={draw=black!75,line width=.6pt,rounded corners=1pt,align=center,inner sep=3pt,text width=38mm},
 flow/.style={->,line width=.7pt}]
\node[box,fill=teal!7,text width=130mm,minimum height=14mm] (head) at (69,0) {\textbf{WTB execution admission}\\persist exact session, provider identity, configuration, input, attempt\\then dispatch with the same binding envelope};
\node[box,fill=blue!4,minimum height=15mm] (dsh) at (21,-25) {\textbf{DSH adapter}\\prepare / invoke\\reconcile};
\node[box,fill=blue!4,minimum height=15mm] (lg) at (69,-25) {\textbf{LangGraph facade}\\prepare / invoke\\reconcile};
\node[box,fill=blue!4,minimum height=15mm] (ray) at (117,-25) {\textbf{Ray capture bridge}\\persist dispatch fence\\submit / collect};
\draw[flow] (head.south -| dsh.north)--(dsh.north);
\draw[flow] (head.south)--(lg.north);
\draw[flow] (head.south -| ray.north)--(ray.north);
\node[box,minimum height=24mm] (dsht) at (21,-53) {Python SDK\\JSON-RPC server\\Cordis composition\\agent/session pipeline};
\node[box,minimum height=24mm] (lgt) at (69,-53) {State adapter\\compiled StateGraph\\thread + checkpoint\\checkpoint persistence};
\node[box,minimum height=24mm] (rayt) at (117,-53) {Ray remote task\\job/task/worker IDs\\returned dispatch ID\\driver-owned fence};
\draw[flow] (dsh)--(dsht); \draw[flow] (lg)--(lgt); \draw[flow] (ray)--(rayt);
\node[box,fill=gray!5,minimum height=16mm] (dshe) at (21,-82) {Durable message\\terminal outcome\\exact session + attempt};
\node[box,fill=gray!5,minimum height=16mm] (lge) at (69,-82) {Checkpoint reference\\terminal outcome\\process-bound identity};
\node[box,fill=gray!5,minimum height=16mm] (raye) at (117,-82) {Task receipt\\terminal outcome\\experimental integration};
\draw[flow] (dsht)--(dshe); \draw[flow] (lgt)--(lge); \draw[flow] (rayt)--(raye);
\node[box,fill=teal!7,text width=130mm,minimum height=14mm] (collector) at (69,-108) {\textbf{Shared EvidenceCollector $\longrightarrow$ EvidenceStore}\\match receipt to admitted binding; preserve native reference and terminal state\\publication separately joins evidence with current authority};
\draw[flow] (dshe.south)--(dshe.south |- collector.north);
\draw[flow] (lge.south)--(collector.north);
\draw[flow] (raye.south)--(raye.south |- collector.north);
\end{tikzpicture}
\caption{Runtime-specific paths to shared evidence. DSH uses the public SDK/JSON-RPC seam; LangGraph uses its workflow facade and state adapter; Ray uses an experimental capture bridge. Native receipt references remain attached to the admitted binding. Appendix~\ref{app:adapters} specifies each path's recovery capabilities.}
\label{fig:adapters}
\end{figure}

\paragraph{Common identity and receipt interface.}
The framework-neutral \code{WorkflowRuntime} port separates attempt preparation, invocation, and reconciliation. \code{RuntimeBindingEnvelope} contains exact session identity, opaque persistence-instance identity, runtime configuration identity, protocol/schema versions, input identity, and attempt ID. Preparation obtains these identities before work is dispatched. \code{RuntimeOutcomeReceipt} returns the matching envelope, terminal outcome, and native evidence. \code{EvidenceCollector} checks the execution, attempt, input, session, and complete binding before appending evidence to \code{EvidenceStore}. Native references remain available for interpretation: a DSH message, a LangGraph checkpoint, or a Ray task.

\paragraph{DSH/Cordis.}
DeepSeek Harness (DSH) uses Cordis for plugin composition \citep{dsh}. \code{DeepSeekHarnessRuntime} connects through its Python SDK to the JSON-RPC server. The runtime session and agent/tool pipeline remain DSH-owned. Preparation checks the SDK's provider identity and configuration before dispatch. Reconciliation uses the admitted session/attempt and durable native message evidence. The public SDK in the evaluated integration supports session receipts; its checkpoint-rewind capability is declared unavailable. This allows durable result lookup to have a precise meaning independent of checkpoint restoration.

\paragraph{LangGraph.}
\code{LangGraphWorkflowRuntime} wraps a state adapter that owns the compiled graph, sessions, and checkpoint persistence. The facade handles workflow binding, invocation, and bound outcomes. Checkpoints supply native references and support checkpoint-oriented operations. The evaluated compatibility facade uses a process-local opaque provider identity, so a newly created facade cannot establish identity continuity merely from the same checkpoint path. An old binding is rejected when that identity cannot be established. The experiments test the corresponding capture and publication responsibilities; they do not imply uniform process-restart recovery across adapters.

\paragraph{Ray prototype.}
The capture bridge persists a driver-owned dispatch fence, invokes a real remote task with native retries disabled, and checks returned job/task identity and the echoed dispatch binding before constructing its outcome receipt. It also records worker identity and the returned object reference. This third-runtime integration exercises native capture and the common authority core. Driver-restart reconciliation is outside the prototype's tested capabilities. Its role in the 960-case matrix remains separate from the two-runtime live interleaving study.

\paragraph{Engineering separation and evidence.}
WTB supplies a shared identity/evidence path and reusable publication/recovery services while preserving each runtime's invocation and persistence semantics. Exact-attempt, generation, consumption, and reconciliation ablations target that shared boundary. The coding studies measure application work above the supplied libraries, including material/version and multi-contributor adapters. Complete SQL and CONTINUITY-bound controls demonstrate the same tested contract at different implementation placements. WTB's observed advantage is the reuse of its packaged responsibilities during integration and maintenance.

\clearpage
\section{Experimental inventory and interpretation}
\label{app:inventory}

The experimental record spans six families. We count a substudy as a separately specified protocol or source cohort: three original-mechanism components, one deterministic runtime matrix, two live runtime cohorts, four independent-adoption cohorts, two SWE integration studies, and three continuous-adaptation cohorts. The equal-contract comparison supplements the original mechanism family. Source functional checks and W1 task replays support workload provenance. They remain separate from the formal study totals.

Four benchmark families contribute workload material. The executed workload count comprises ten qualified prospective SWE repositories, three calibration projects, and three fixed source tasks. The earlier controlled adoption cohort contributes a governance fixture. The 12 selected prospective SWE repositories yield ten qualified execution workloads; the two unqualified selections contribute no measured method outcomes.

\begin{table}[htbp]
\centering
\caption{Evidence modes and counting units. A coding stage can contain many model responses and acceptance checks. Counts across these rows must retain their unit labels.}
\begin{tabularx}{\linewidth}{@{}l X r@{}}
\toprule
Evidence mode & Unit and coverage & Count \\
\midrule
LLM mechanism & Authority cells across four arms & 128 \\
LLM planning & Saved proposals across three arms & 96 \\
Deterministic enforcement & Matched saved W0 intents & 32 \\
Contract coverage & Seven implementations, 56 checks each & 392 \\
Deterministic runtime & Six native captures forked across worlds/arms & 960 \\
Live runtime & In-flight authority change executions & 24 \\
LLM independent adoption & Four source cohorts, seven stages & 1,120 \\
LLM prospective SWE & Ten repositories, four arms, three stages & 120 \\
LLM continuous adaptation & Complete observed trajectories & 117 \\
Continuous stage outcomes & Four stages per observed trajectory & 468 \\
\bottomrule
\end{tabularx}
\end{table}

The two controlled matrices contain $128+960=1{,}088$ cases at their respective cell definitions. Independent adoption, prospective SWE, and continuous adaptation contribute $1{,}120+120+468=1{,}708$ coding-stage outcomes. These arithmetic totals provide an inventory; they do not define an inferential sample. In particular, the continuous study has three source tasks, 117 observed trajectories, and 8,067 completed response records. Task replication, trajectory replication, and within-trajectory work answer different questions.

\paragraph{Study names.}
Runtime Responsibility Ablation names the deterministic cross-runtime responsibility matrix. Live Authority Interleaving Validation names the in-flight authority-change study. Source-Derived Adoption Cost names independent integration stages. Continuous Adaptation Trajectories names sequential E3--E6 maintenance. Historical identifiers survive only in the companion data's path mapping so existing files remain discoverable.

\paragraph{Three comparison levels.}
At the observation level, G0/GM0, C0, R0, and W0 vary evidence exposed to the model. At the implementation level, Git/Auth, \cont{}, \mas{}, and WTB differ in their supplied governance mechanisms. Original and enhanced variants are evaluated separately for contract completeness. At the execution level, DSH/Cordis, LangGraph, and Ray provide runtime evidence. SQL guard and \cont{}-bound adapter are complete effect-control implementations. Keeping these levels explicit prevents a runtime's persistence feature from being treated as an authorization algorithm.

\subsection{Supplied capabilities and application starting points}
\label{app:starts}

The paid arms identify a native package together with a study-specific application start. The original/enhanced labels in Table~\ref{tab:parity} instead identify configurations in the separate 14-scenario contract comparison. That table establishes achievable coverage. The paid experiments measure incremental work above the packages and starts specified here. Responsibility coverage, API convenience, and application wiring jointly contribute to their observed differences.

\paragraph{Native capability inventory.}
All arms can inspect their assigned read-only native source and public API/contract documentation, write application modules and self-tests, and invoke the assigned native mechanisms. The shared DSH runtime supplies execution records. WTB exposes \code{EvidenceCollector}, attempt identity and persistent evidence stores, plus publication, external-delivery, and authority operators. Git/Auth exposes real Git object storage and compare-and-swap/transactional reference updates; application code supplies authorization, receipt mapping, and effect/recovery integration. \cont{} exposes grants, security contexts, envelopes, proof verification, policy epochs, execution permits, nonce ledgers, finality sinks, and effect brokers. \mas{} exposes policy compilation, scoped stage state, and native authorization monitoring. The latter two retain their native denial checks; DSH evidence persistence and application mapping remain integration work. This inventory describes supplied functionality, without assigning equal implementation effort to the four packages.

\paragraph{Independent-adoption starts.}
Each arm's E0 starts with empty application code. E1 starts from a frozen evidence/acquisition implementation; the agent adds publication. E2 starts from a publication predecessor; the agent adds external recovery. E3 and E4 start from recovery-capable references and add a workflow and contribution-chain handoff, respectively. E5 also receives chain support and adds the transport envelope; E6 receives that envelope and adds strict parsing. The predecessor builder copies arm-specific reference modules and exposes the common application interface. Predecessors satisfy their supplied obligations while leaving the target change to the agent. Their historical construction is outside the measured work. Each slot is initialized independently, so an E2 success can coexist with an E1 failure in the same block.

\paragraph{Prospective SWE starts.}
Each arm/repository begins with empty candidate and project-governance source. The supplied project reference patch supplies task material; the agent receives no reference governance implementation or candidate repair. Read-only native packages remain available. The agent implements the public application interface in the project's governance module, connects actual receipts and material, and retains its code through C0, E2, and E4, including unsuccessful work. Acceptance combines project tests, cumulative governance behavior, and native-mechanism invocation. The separate calibration study permits reusable charged application code across its three projects and retains its own continuation protocol.

\paragraph{Continuous-adaptation starts.}
All four arms receive enhanced, arm-specific reference application implementations from the same source-task setup. Each of the twelve task--arm starts passes the common E0--E2 qualification before paid maintenance begins. This includes version-sensitive publication and recovery. Starting capability is therefore qualified at a common boundary, while code and APIs retain their architectural differences. The agent adds E3--E6 behavior and carries only its preceding accepted candidate forward. These supplied starts establish the initial condition for incremental maintenance; the independent paid E1 cohort measures coding publication from an earlier predecessor.

\paragraph{Budgets and reproducible inputs.}
Independent adoption, prospective SWE, and continuous adaptation use at most 64 valid responses, 64 requested actions, and three acceptance submissions per stage, with DeepSeek V4 Flash at high reasoning and a 16,384-token response cap. Native/public sources remain read-only; generated application code and tests are editable. The compact companion package documents independent-adoption starts and stage contracts in \path{WTB/source-derived-adoption-cost/REPORT.md} and each source's \path{DETAILED_REPORT.md}. The final publication records and reconstruction entry point are in \path{submission/e1-final/}. The prospective study's \path{WTB/benchmark-experiment/REPORT.md} describes empty starts, cumulative C0--E2--E4 carryover, and primary/development cohorts. The continuous study's \path{submission/v23-continuous/PROTOCOL.md} specifies qualification and sequential carryover. These records distinguish supplied mechanisms from measured application work.

\section{Original Mechanism: evidence, intent, and effect}
\label{app:original}

\subsection{Question and controlled design}
The first question concerns observability: can authority evidence resolve cases in which ordinary artifact information leaves the publication obligation ambiguous? Four semantic templates, two model routes, two lineage directions, and two authority worlds form 32 matched strata under each of four arms. Models are DeepSeek V4 Flash and GPT-5.4 Mini. G0 supplies Git evidence with candidate material unavailable; C0 adds both candidate generations and neutral context; R0 adds raw authority receipts; W0 encodes the same admitted primitives in typed form. C0--R0 isolates information and R0--W0 isolates representation. Final semantic scoring examines the completed artifact/decision. First-action correctness uses the prescribed policy for each arm and measures within-arm protocol adherence. ITT additionally requires adherence across the execution.

\begin{table}[htbp]
\centering
\caption{Authority-control outcomes. Every arm has 32 terminal cells. First-action correctness is arm-policy-specific; final semantic success is the common outcome for the main information comparison.}
\begin{tabular}{@{}lrrrr@{}}
\toprule
Arm & First action & Semantic success & ITT composite & Unsafe adoption \\
\midrule
G0 & 32/32 & 0/32 & 0/32 & 0/32 \\
C0 & 0/32 & 0/32 & 0/32 & 0/32 \\
R0 & 32/32 & 32/32 & 31/32 & 0/32 \\
W0 & 32/32 & 32/32 & 32/32 & 0/32 \\
\bottomrule
\end{tabular}
\end{table}

The R0--C0 semantic contrast has 32 favorable and zero unfavorable discordances. The exact two-sided probability is $2/2^{32}=4.6566\times10^{-10}$ for this finite paired matrix. R0 and W0 tie on semantic success. Thus, access to authority facts explains this contrast. G0 follows its prescribed first action in all 32 cells while its unavailable candidate prevents final semantic success. Its first-action score answers a different question from the common semantic outcome. R0's single ITT miss is a permitted-file reread that preserved final semantics while violating the strict execution protocol.

\paragraph{Worked pair: log redaction.}
The saved \code{log-redaction-v362-01}, ADVANCE template starts with authorization-header redaction enabled and cookie redaction disabled. Candidate $g_1$ preserves that policy; candidate $g_2$ enables both. Git records $g_1$, while a later native attempt produces $g_2$. In the completed-publication world, its accepted external publication receipt makes $g_2$ authoritative: the correct resolving output is \code{ADOPT\_SOURCE} naming $g_2$. In the rejected-publication world, $g_1$ remains authoritative and already matches the target: the correct resolving output is \code{NOOP}. The observable target bytes and candidate meanings alone leave the external publication relation unresolved.

\paragraph{Raw and typed observations for this pair.}
C0 exposes both candidate implementations and their contract materials, together with Git evidence and the accepted-publication-channel policy, but omits the external relation. R0 adds \code{rawExternalReceipts}: the exact runtime attempt and input, produced artifact, and publication request/response with its receipt or rejection. W0 organizes those same admitted primitives under \code{wtbRelation}, including runtime-attempt, generation-input, artifact, and publication-exchange fields. Neither condition needs the model to infer publication from the candidate's desirable content. In the saved ADVANCE rows, both tested models choose \code{BLOCK} under C0 in both worlds. Under R0 and W0 they adopt $g_2$ in the completed world and choose \code{NOOP} in the rejected world.

\paragraph{Common semantic success and safe abstention.}
The recorded task-level predicate requires permitted execution, a resolving action other than \code{BLOCK}, and a final material value matching the authoritative record. \code{NOOP} resolves a verified already-correct target; \code{BLOCK} leaves the authority question open. Thus the C0 rejected-world row retains the correct bytes but receives zero semantic-success credit because it abstains from resolution. Its unsafe-adoption outcome also remains zero. This explains the 0/32 C0 completion count: safe abstention and completed resolution are separate outcomes. Proposition~\ref{prop:sufficiency}'s balanced-pair corollary concerns a forced binary admissibility decision from identical observations. The experiment additionally permits abstention and scores a resolved artifact, so that binary accuracy statement is not a prediction of its completion rate.

\paragraph{Arm-specific first-action policy.}
The protocol prescribes \code{BLOCK} for G0's unavailable-candidate condition. C0, R0, and W0 retain the world-specific resolving target: adopt the authoritative candidate when it differs from the target, otherwise \code{NOOP}. C0 supplies candidate bytes but omits the external authority facts needed to resolve the paired worlds. These scores measure adherence to each assigned target; the common semantic predicate and separate unsafe-adoption metric carry the cross-arm outcome interpretation. The four templates cover request tracing, write idempotency, log redaction, and webhook-signature policy; the same scoring rule applies to each.

\subsection{Planning and matched-intent enforcement}
The shared-workspace planning study uses DeepSeek V4 Flash and GPT-5.4 Nano. Each observation arm receives 32 calls, with 16 unauthorized and 16 authorized situations. Pooled unsafe proposals are 12/16 under GM0, 9/16 under raw receipts, and 6/16 under W0. Unknown or invalid decisions are 2/32, 2/32, and 11/32, respectively. DeepSeek contributes unsafe counts of 5/8, 8/8, and 0/8 across these arms; GPT-5.4 Nano contributes 7/8, 1/8, and 6/8. All eleven W0 unknowns come from DeepSeek. The aggregate direction therefore combines a model-specific behavioral reversal with limited planner availability. This analysis bounds any general claim about typed representation improving model decisions.

The enforcement intervention takes the same 32 saved W0 planning outputs and applies the native publication gate. All 16 unauthorized situations produce zero unauthorized effects. In particular, the six unsafe publish intents remain unchanged inputs and are rejected at the effect boundary. All 12 eligible authorized intents execute. Four authorized-world rows contain no valid publish intent. The gate preserves authorized execution when such an intent exists; it cannot supply a missing planner action. Among the six discordant unsafe intents, the exact paired two-sided value is $0.03125$. This is a matched replay result over saved intents.

\subsection{Original and enhanced comparison mechanisms}
Four tasks in each of 14 scenarios yield 56 checks per implementation. Original Git/Auth passes 36, original \cont{} 44, and original \mas{} 28. Their enhanced implementations and WTB each pass all 56, yielding 392 check outcomes in total. Enhancement supplies missing execution, material, evidence, generation, and publication obligations as appropriate to each architecture.

The enhanced controls establish contract implementability across architectures. Reusable packaging and measured coding adoption are the subsequent empirical questions once complete contract coverage is supplied. Engineering hours required to create the enhancements were not measured. This family reports controlled information and effect behavior; coding cost is evaluated in the adoption and maintenance studies.

\section{Runtime Responsibility Ablation: fault-specific behavior}
\label{app:runtime}

\subsection{Design and responsibilities}
Native captures from DSH/Cordis, LangGraph, and Ray are evaluated on SQLite and PostgreSQL. Two captured attempts per runtime supply six native executions. Each runtime/backend pair evaluates 20 prescribed worlds under eight implementations, producing 960 deterministic cases. Worlds draw on the captured attempts and change controlled event, material, or authority inputs. This construction isolates contract behavior from repeated model or task execution. The recorded environment includes LangGraph 1.0.6 and Ray 2.49.2.

The three complete controls are WTB, an independently implemented SQL guard, and a \cont{}-bound adapter. Each passes 120/120 cases, executes all 60 expected legal effects, and produces zero unsafe effects. Five WTB ablations remove exact-attempt binding, monotone event reconciliation, scoped generation binding, single consumption, or unknown-effect reconciliation. Table~\ref{tab:ablations} reports the resulting outcomes.

\begin{table}[htbp]
\centering
\caption{Twenty controlled worlds, grouped by the publication obligation. ``One'' denotes one valid durable effect; ``zero'' denotes safe rejection.}
\begin{tabularx}{\linewidth}{@{}X l@{}}
\toprule
World or event sequence & Expected effects \\
\midrule
Clean execution; unrelated policy change & One each \\
Relevant policy change; revoke then regrant & Zero each \\
New admitted attempt; late old attempt & Zero each \\
Duplicate terminal; late running; new delivery identifier & One each \\
Running delivered before terminal & One \\
Conflicting completion delivered last; conflicting terminal states & Zero each \\
Failed terminal; missing terminal; material substitution & Zero each \\
Replay; concurrent replay & One total each \\
External effect with lost response; crash before external send & One total each \\
External retry with stale authority & Zero \\
\bottomrule
\end{tabularx}
\end{table}

\subsection{Failure interpretation}
Removing exact-attempt binding yields 108/120 correct outcomes and 12 unsafe effects. A valid receipt from a different or superseded execution cannot authorize the current publication. Removing scoped generation binding yields 114/120 and six unsafe effects: a previously valid actor/scope relation survives in the application after its governing authority changes. Removing single consumption yields 108/120 and 12 unsafe effects, locating replay safety in durable effect/approval state. In that ablation, all 60 legal-world rows set their completion flag, while only 48 satisfy the expected effect count and full contract. The main table reports the latter legal-correct metric explicitly.

Removing monotone reconciliation yields 102/120 correct outcomes, six unsafe effects, and 48/60 legal effects. This responsibility affects both safety and availability: later delivery order must preserve established lifecycle meaning while contradictory terminal evidence triggers reconciliation. Removing unknown-effect recovery yields 114/120 correct outcomes, zero unsafe effects, and 54/60 legal effects. Conservative rejection protects safety in these cases but loses valid recoverable work.

The matrix consequently separates responsibility-specific failure signatures. A single aggregate success number would conceal whether a removal admits an unauthorized write, duplicates an effect, or suppresses legal progress. Equal complete-control results establish the tested contract's portability across adapter and storage combinations. They do not measure the prevalence of each fault in production, independent replication over 960 fresh executions, or the comparative cost of maintaining the adapters.

\section{Live Authority Interleaving Validation}
\label{app:live}

\subsection{Temporal intervention}
This study asks whether a permission change committed during execution is respected at publication. The observed order is: admit the attempt, commit initial authority, enter native execution, commit the intervening authority change, receive the result, capture its native receipt, and decide publication. The two interventions affect either the active publication scope or an unrelated scope.

Two runtime cohorts, DSH/Cordis and LangGraph, each cross two backends, three complete controls, and two authority interventions. The resulting 24 rows are live native executions with deterministic payloads and no paid model calls. Each of the 12 relevant-scope changes yields safe rejection and zero effects. Each of the 12 unrelated changes preserves authority and yields one effect.

\begin{table}[htbp]
\centering
\caption{Live cohort results. Within each runtime, each implementation passes 4/4 and each backend passes 6/6.}
\begin{tabular}{@{}lrrrr@{}}
\toprule
Runtime & Total correct & Relevant correct & Unrelated correct & Total effects \\
\midrule
DSH/Cordis & 12/12 & 6/6 & 6/6 & 6 \\
LangGraph & 12/12 & 6/6 & 6/6 & 6 \\
\bottomrule
\end{tabular}
\end{table}

\subsection{Mechanism and boundary}
The result connects authority freshness to temporal order. A pre-execution check alone could preserve an obsolete permission through a long execution. All three complete implementations re-evaluate the binding after the intervening change. The unrelated intervention measures scope precision: a global change elsewhere should leave the current publication valid.

WTB, SQL guard, and the \cont{} adapter each pass 8/8. Thus, fresh authority evaluation succeeds under all tested complete implementations. Ray appears in the deterministic responsibility study; this live design covers DSH/Cordis and LangGraph. Longer in-flight intervals, multiple concurrent authority writers, crashes during the intervention, and further adapters require additional experiments. With all rows specified by design, counts describe the tested schedules directly.

\section{Source-Derived Adoption Cost: stage and source analysis}
\label{app:adoption}

\subsection{Question, intervention, and stage meanings}
The independent-adoption study measures coding work required to realize a common acceptance contract through four supplied mechanisms. All four arms run on DSH. Each task--block--arm--stage slot starts from its specified scaffold and receives the study's common coding budget. A later stage's start is independent of the preceding slot's accepted code. The design isolates stage-specific integration difficulty.

The seven stages are E0 evidence acquisition and concurrency, E1 publication, E2 recovery, E3 introduction of a second workflow, E4 contribution-chain handoff, E5 receipt-envelope change, and E6 strict parsing. Ten blocks per source, four arms, and seven stages give 280 slots per source cohort. The earlier controlled source cohort and three benchmark-source cohorts together yield 1,120 slots.

\subsection{Earlier controlled cohort}
WTB accepts 51/70 stages at \$7.4660, compared with 39/70 at \$10.2273 for Git/Auth, 37/70 at \$11.5691 for \cont{}, and 39/70 at \$11.7276 for \mas{}. Costs per accepted stage are \$0.1464, \$0.2622, \$0.3127, and \$0.3007. Acceptance repairs number 37, 69, 72, and 70. The outcome combines more accepted stages and less model spending for the same assigned workload.

Publication acceptance is 5/10 for WTB and 1/10 for each comparator. Recovery acceptance is 7/10 for WTB, 3/10 for Git/Auth, and 1/10 for each research baseline. Handoff acceptance is 8/10, 6/10, 2/10, and 3/10. These stage contrasts motivate transfer to fixed tasks from three benchmark sources.

\subsection{Three-source transfer and heterogeneous stages}
\begin{table}[htbp]
\centering
\caption{Independent-adoption results by source. Costs include accepted and method-failed work.}
\begin{tabular}{@{}llrrr@{}}
\toprule
Source & Method & Accepted & Cost (\$) & Cost/accepted (\$) \\
\midrule
\tableinput tables/adoption_by_source.tex
\bottomrule
\end{tabular}
\end{table}

WTB leads accepted-stage count and cost per accepted stage on all three tasks. TeamBench contributes 60 WTB successes, BulkPR-Bench 58, and Terminal-Bench 58, each out of 70. The pooled result is 176/210, compared with 121, 108, and 103 for the three comparators. Stage-level inspection localizes the advantage to acquisition E0, publication E1, recovery E2, and contribution handoff E4. Git/Auth and \cont{} complete more E3 and E6 slots, preserving a heterogeneous stage profile.

\begin{table}[htbp]
\centering
\caption{Per-source stage acceptance; each entry has denominator ten.}
\begin{tabular}{@{}lrrrrrrr@{}}
\toprule
Method & E0 & E1 & E2 & E3 & E4 & E5 & E6 \\
\midrule
\multicolumn{8}{l}{\emph{TeamBench}}\\
\tableinput tables/adoption_task1.tex
\midrule
\multicolumn{8}{l}{\emph{BulkPR-Bench}}\\
\tableinput tables/adoption_task2.tex
\midrule
\multicolumn{8}{l}{\emph{Terminal-Bench}}\\
\tableinput tables/adoption_task3.tex
\bottomrule
\end{tabular}
\end{table}

\paragraph{Publication version binding.}
The final E1 cohort contains 160 terminal slots, 40 per source and 40 per method. Across the three benchmark-source tasks, WTB accepts 7/10 TeamBench, 7/10 BulkPR-Bench, and 5/10 Terminal-Bench publication slots. Git/Auth accepts 2/10, 2/10, and 3/10; \cont{} accepts 0/10 on each; \mas{} accepts 0/10, 2/10, and 1/10. The earlier controlled source adds five WTB acceptances and one for each comparator. Thus the observed coding result supports a practical advantage for WTB's packaged version-bound publication interface under this fixed budget. The unsuccessful method outcomes remain in the denominator and cost accounting. Successful integrations exercise the contract's material-version relation, which distinguishes byte-identical material re-admitted under a new version.

\paragraph{Maintenance bundles and cost.}
The count of blocks whose independent E0--E2 slots all pass is 13/30 for WTB, 1/30 for Git/Auth, and zero for \cont{} and \mas{}. Full E0--E6 bundle counts are 9/30 for WTB and zero for each comparator. E3--E6 maintenance bundles pass in 20/30, 10/30, 10/30, and 7/30 blocks. These bundles aggregate separately initialized slots. A persistent candidate's survival through four changes is measured in Appendix~\ref{app:continuous}. Jointly accepted slots also yield heterogeneous cost: WTB uses less model-response cost in all seven jointly accepted E2 slots with Git/Auth, while Git/Auth uses less in most jointly accepted E3 slots.

The fixed-workload spending advantage describes these particular tasks, libraries, scaffolds, and budgets. It excludes construction of the starting libraries and human engineering effort. A low cost per accepted stage can reflect both cheaper attempts and more accepted attempts. Matched joint-success costs address only the overlapping successful subset, whose composition can itself depend on the method.

\section{SWE Repository Integration: transfer and recovery}
\label{app:swe}

\subsection{Selection and cumulative protocol}
The prospective selection contains 12 repositories. Ten satisfy input and native-test qualification before method outcomes are measured. Eight form the primary cohort: xarray-dataclasses, fsspec, crossplane, kafka-python, dlx, azure-cli, numba, and tornado. Plopp and Modin form the development cohort. Each qualified repository receives four method integrations and three stages, yielding 120 scheduled stage slots.

The coding agent begins with empty application integration code and supplied native libraries and project reference material. Core C0 combines native project tests with five governance scenarios. Recovery E2 extends the cumulative contract to seven scenarios. Two-writer handoff E4 extends it to eleven. Each stage permits at most 64 valid model responses, 64 actions, and three acceptance submissions. DeepSeek V4 Flash runs with high reasoning. Paid code carries forward, including unsuccessful earlier work, so a later stage can repair earlier obligations. Consequently, recovery counts may exceed core counts.

\subsection{Primary outcomes and paired uncertainty}
WTB accepts 5/8 C0 integrations, 8/8 recovery integrations, and 4/8 handoffs. Each comparator accepts 1/8 recovery integrations and zero handoffs. The six planned WTB contrasts comprise recovery and handoff versus each comparator. At recovery, each contrast has seven WTB-only and zero comparator-only successes, giving exact $p=0.015625$. At handoff, each has four and zero, giving $p=0.125$. Holm adjustment across the six yields $0.09375$ for each recovery contrast and $0.375$ for each handoff contrast. No contrast passes a familywise 0.05 threshold.

The effect sizes are useful descriptive evidence despite limited inferential power: WTB completes recovery on all eight qualified primary repositories, and handoff on half. A larger prospective repository sample would narrow the uncertainty and test sensitivity to selection and model configuration.

\subsection{Cost and counterexamples}
Through recovery, the eight-repository totals are \$3.1825 for WTB, \$3.8082 for Git/Auth, \$4.8002 for \cont{}, and \$3.4498 for \mas{}. WTB's fixed-workload spending is 16.4\%, 33.7\%, and 7.7\% lower, respectively. Through handoff, WTB costs \$6.0202; Git/Auth costs less at \$4.9509 while accepting no handoff. This illustrates how cost totals depend on attained outcomes and continuation behavior.

\begin{table}[htbp]
\centering
\caption{Jointly accepted primary project/stage cost pairs. Costs include earlier paid work through the indicated stage.}
\begin{tabular}{@{}llrr@{}}
\toprule
Repository/stage & Comparator & WTB (\$) & Comparator (\$) \\
\midrule
kafka-python C0 & CONTINUITY & 0.1803 & 0.1141 \\
dlx C0 & MasDrift & 0.1631 & 0.1519 \\
tornado E2 & Git/Auth & 0.6305 & 0.2166 \\
kafka-python E2 & CONTINUITY & 0.2533 & 0.4656 \\
tornado E2 & MasDrift & 0.6305 & 0.2188 \\
\bottomrule
\end{tabular}
\end{table}

Joint-success costs favor WTB in the Kafka recovery comparison and favor the comparator in the other four listed pairs. The conclusion is therefore higher fixed-budget completion with task-dependent cost tradeoffs. Native tests and governance checks jointly define acceptance. Since project artifacts originate from reference patches, autonomous issue-solving performance is outside this integration estimand.

\subsection{Calibration and development}
The earlier three-project calibration supplies one equally budgeted, charged continuation after all four arms fail the first repository under the initial budget. WTB eventually completes 3/3 at \$0.6641; Git/Auth completes 1/3 at \$0.3918 across attempted work; \cont{} and \mas{} complete 0/3 at \$0.3957 and \$0.2753. On the first jointly completed project, WTB spends \$0.4725 and Git/Auth \$0.2229. This directly bounds the cost claim.

On the two prospective development repositories, WTB accepts 2/2 core, 1/2 recovery, and 1/2 handoff stages at cumulative \$2.3250. Git/Auth and \cont{} accept none at \$1.7950 and \$1.7909. \mas{} accepts one core stage and no later stage at \$1.4564. Development outcomes document protocol behavior separately from the eight-repository primary analysis.

\section{Continuous Adaptation Trajectories: detailed source analysis}
\label{app:continuous}

\subsection{Sequential design and outcome definition}
Each of the twelve task--method starting implementations satisfies the common E0--E2 contract before paid E3--E6 adaptation. All arms share workload, model configuration, public contract, action budget, and finish budget. Native libraries are read-only. Application integrations may use their documented interfaces. Each stage allows up to 64 responses, 64 actions, and three finish attempts.

E3 adds a second workflow. E4 adds exact-attempt multi-contributor material-chain governance. E5 changes the receipt envelope. E6 adds strict parsing and maintenance requirements. A later stage consumes the preceding accepted candidate. Failure closes the remaining stages for that trajectory. The principal outcome is complete E3--E6 success; stage acceptance, code growth, and incremental model-response cost describe how that outcome arises.

The analyzed cohort contains 117 fully observed trajectories and 468 stage outcomes. Its 8,067 completed model-response records total \$27.615736318. Task--method denominators are nine or ten, as specified in Table~\ref{tab:continuousfull}. Inclusion requires an observable method outcome across the trajectory, including explicit predecessor-blocked stages after a method failure. Operationally interrupted trajectories with unresolved outcomes are outside this complete-case analysis; the original run protocol prohibits selective replacement. Filtering retains all observed method failures and their recorded response costs. Reported spending is conditional on the analyzed cohort and completed response records; total billing for every scheduled call is outside this estimand. Paired analyses use the corresponding observed task--block intersection. The operational archive is separate from the publication dataset. The complete-case comparison does not establish outcomes for unobserved executions or eliminate possible selection effects.

\begin{table}[htbp]
\centering
\caption{Complete trajectories and model-response cost. Undefined cost per success is marked with a dash.}
\label{tab:continuousfull}
\begin{tabular}{@{}llrrr@{}}
\toprule
Source & Method & Success & Cost (\$) & Cost/success (\$) \\
\midrule
\tableinput tables/continuous.tex
\bottomrule
\end{tabular}
\end{table}

\subsection{Stage survival and source-specific findings}
Figure~\ref{fig:outcomes} in the main text displays stage survival. The following table preserves each entering denominator, which changes after a predecessor failure.

\begin{table}[htbp]
\centering
\caption{Accepted/entered stage counts. Predecessor-blocked trajectories leave the denominator of subsequent entered-stage rates.}
\begin{tabular}{@{}llrrrr@{}}
\toprule
Source & Method & E3 & E4 & E5 & E6 \\
\midrule
\tableinput tables/continuous_stages.tex
\bottomrule
\end{tabular}
\end{table}

\paragraph{TeamBench.}
WTB completes 5/10 trajectories, Git/Auth 2/10, \cont{} 0/10, and \mas{} 2/9. Git/Auth has strong E3 entry performance, but only three of ten entering E4 candidates satisfy contribution-chain handoff. WTB accepts six of eight entrants at E4 and carries five through E5 and E6. The data locate a substantial separation at the handoff change. WTB's total cost is \$1.9719 and cost per completed trajectory \$0.3944, compared with \$1.2876 for Git/Auth and \$1.3677 for \mas{}. The single jointly completed Git/Auth block costs 54.3\% less under WTB.

\paragraph{BulkPR-Bench.}
WTB completes 6/10, Git/Auth 3/9, \cont{} 1/10, and \mas{} 2/10. WTB accepts all eight candidates entering E4. The corresponding rates are 4/6, 1/9, and 2/9. Thus, E4 contributes particularly large losses for the adapted research-code controls. WTB spends \$2.3017 in total, while Git/Auth spends less at \$1.6399 and completes fewer trajectories. Cost per success is \$0.3836 for WTB and \$0.5466 for Git/Auth. Across their three joint successes, WTB's cost reductions range from 1.2\% to 33.1\%.

\paragraph{Terminal-Bench.}
WTB completes 6/9, Git/Auth 0/10, \cont{} 2/10, and \mas{} 2/10. WTB accepts 8/9 entrants at E4, compared with 3/9, 2/5, and 2/9. Git/Auth accepts two of its three E5 entrants and loses both at E6. WTB's total cost is \$2.1263 and cost per success \$0.3544. \cont{} uses less total cost at \$1.5764 and costs \$0.7882 per success. The one jointly completed \mas{} block costs 41.9\% less under WTB. These comparisons preserve the distinction between assigned-workload expense and expense conditional on joint completion.

\subsection{Failure categories and implementation behavior}
E4 acceptance requires a material chain covering both contributing writers, exact execution-attempt binding, and renewed authority checks at publication and recovery. WTB connects native DSH receipts to its publication path. Comparator implementations sometimes achieve the same contract; their successful rows establish feasible adaptations.

\begin{table}[htbp]
\centering
\caption{E4 outcome decomposition. Bound: budget-bounded method failure; scope: invalid coding scope; envelope: invalid protocol envelope; blocked: predecessor failure. All five columns partition the observed trajectories in each row.}
\begin{tabular}{@{}llrrrrr@{}}
\toprule
Source & Method & Accepted & Bound & Scope & Envelope & Blocked \\
\midrule
\tableinput tables/e4_failures.tex
\bottomrule
\end{tabular}
\end{table}

Application code growth describes the successful implementations' integration footprint. It is conditioned on success and excludes native-library construction. WTB supplies much of the checked publication behavior through its library, while application adapters compose task payloads and material references. Small application diffs therefore measure reuse from the supplied starting point. They cannot establish a reduction in total system complexity or development effort.

\begin{table}[htbp]
\centering
\caption{Successful-trajectory code changes and actions: mean application lines added, test lines added, and actions. Rows without success have no conditional mean.}
\begin{tabular}{@{}llrrrr@{}}
\toprule
Source & Method & Successes & App added & Test added & Actions \\
\midrule
\tableinput tables/code_changes.tex
\bottomrule
\end{tabular}
\end{table}

\subsection{Paired uncertainty and joint-success cost}
\begin{table}[htbp]
\centering
\caption{All nine continuous paired comparisons. W-only and B-only count discordant successes; both counts joint successes. Exact two-sided McNemar tests are Holm-adjusted across all nine.}
\begin{tabular}{@{}llrrrrrr@{}}
\toprule
Source & Baseline & Pairs & W-only & B-only & Both & $p$ & Holm $p$ \\
\midrule
\tableinput tables/continuous_paired.tex
\bottomrule
\end{tabular}
\end{table}

Every source shows more complete trajectories for WTB in the observed cohort. No adjusted comparison passes 0.05; the smallest adjusted value is 0.28125. This combination supports a consistent descriptive direction across three fixed tasks while leaving statistical uncertainty substantial. Repeated blocks quantify model/coding variability for these tasks. Broader benchmark-distribution claims require more independently sampled tasks.

\begin{table}[htbp]
\centering
\caption{All six jointly completed task--block cost pairs. Conditioning on both methods' success restricts the interpretation to this subset.}
\begin{tabular}{@{}llrrrr@{}}
\toprule
Source & Baseline & Block & WTB (\$) & Baseline (\$) & Reduction \\
\midrule
\tableinput tables/common_costs.tex
\bottomrule
\end{tabular}
\end{table}

\subsection{Storage and qualification evidence}
All twelve starting implementations pass 117 qualification assertions each, totaling 1,404 assertions across E0--E2. DSH execution and restart work on both JSONL and SQLite. Six closed-store, same-backend location moves recover twelve exact receipts. All four methods pass 6/6 backend-contract checks on each of the three tasks, giving 72 method--task--check outcomes, with zero additional application changes. These checks concern a shared substrate capability. Cross-backend import and binary-version interoperability require separate testing.

\section{Benchmark source tasks and task-level mechanism checks}
\label{app:sources}

\subsection{Source provenance and functional outcomes}
The source records distinguish upstream benchmark revisions from frozen workload revisions. TeamBench's upstream revision is \code{d185aef}; its \code{CROSS2\_schema\_evolution} workload uses the adapted two-file base \code{f32f2c2}. BulkPR-Bench's upstream revision is \code{f4f2ea2}; the paired \code{attrs-PR-07-PR-25} workload uses attrs at \code{40487eb}. Terminal-Bench's upstream revision is \code{452bf30}; its \code{batched-\allowbreak eval-\allowbreak parity} workload uses the adapted two-file base \code{2f8035e}. The task mappings record pre-applied changes and parent bases. TeamBench material comes from an adapter-generated solution verified by the task grader; the other sources use reference solutions or patches. All three two-producer decompositions are experimental file-group adaptations. The upstream TeamBench benchmark defines Planner/Executor/Verifier roles, distinct from the experimental writer groups.

The source captures record 12/12 TeamBench functional checks, 164 passing local attrs tests, and five passing Terminal-Bench tests with reward one. BulkPR's saved local tests provide functional evidence for the constructed source workload; the official hidden-verifier result is unavailable. The governance studies freeze task material and reference changes, then ask whether downstream agents can implement publication, recovery, and maintenance under a common contract. These outcomes retain the upstream task context without becoming benchmark-wide issue-solving scores.

\subsection{Task-level W1 preliminary replays}
The three fixed source tasks also have deterministic task-level W1 mechanism replays. Without an additional guard, unsafe effects occur in 1/3 TeamBench cases, 2/3 BulkPR cases, and 1/3 Terminal cases, totaling 4/9. WTB and the fresh-authority reference each produce zero unsafe effects in the same nine cases. Legal-effect coverage is 2/2, 1/1, and 2/2 by task, totaling 5/5.

These small preliminary replays provide a direct source-task illustration of the mechanism and preserve their exploratory status. They enter neither the 128-cell authority matrix nor the 960-case runtime ablation denominator. The formal runtime studies share a cross-runtime contract across sources; each source contributes separate task-level illustration and downstream coding evidence.

\section{Reconstruction and reproducibility scope}
\label{app:protocol}

\paragraph{Numeric source map.}
The manuscript table/figure builder reads compact saved results. The following paths are relative to the companion experiment package. Its name map resolves historical identifiers to the descriptive study names used throughout this paper.
\begin{itemize}
\item Continuous adaptation: \path{submission/v23-continuous/analysis/publication-analysis.json.gz}.
\item Prospective SWE: \path{WTB/benchmark-experiment/derived/prospective/}, containing \path{analysis.json} and \path{trajectories.json}.
\item Source adoption: each source's \path{source-derived-adoption-cost/} directory contains \path{stage-results.csv} and \path{historical-summary.csv}.
\item Original mechanism: \path{WTB/original-mechanism/derived/main-mechanism/report-data.json}; saved planning and effects are in that family's \path{raw/main-mechanism/} directory.
\item Equal-contract comparison: \path{WTB/supplementary-experiments/derived/layered/}.
\item Runtime studies: \path{WTB/runtime-mechanism-validation/}; live rows are in \path{raw/live-interleaving/results.json} within that directory.
\end{itemize}
Each family report explains its protocol, outcomes, and reconstruction entry points.

\paragraph{Offline manuscript build.}
The source archive includes all generated table fragments, the quantitative vector figure, and three editable TikZ diagrams. Running PDFLaTeX, BibTeX, and PDFLaTeX twice rebuilds the paper without model calls or access to the experiment archives; the repository build script performs these steps. With the compact experiment package available at the documented location, \code{python build\_assets.py} regenerates table fragments and the trajectory figure from saved numeric records. It performs no manuscript prose generation and no paid experiment execution.

\paragraph{Re-execution boundary.}
The compact package supports offline reconstruction of recorded results and supplies selected integration and capture code. A fresh end-to-end execution additionally requires the WTB SDK, native runtime and comparator libraries, experiment controllers, frozen application starts, and container environments; the compact analysis distribution does not bundle that complete execution stack. The experimental comparisons concern the recorded model versions and supplied mechanisms. Changing provider behavior, native-library versions, prompts, or acceptance contracts defines a new experimental condition.

\end{document}